\documentclass[letterpaper, 10 pt, conference]{ieeeconf}  
\usepackage[bottom=0.685in,top=0.681in,left=0.625in,right=0.625in]{geometry}

\IEEEoverridecommandlockouts                              
\usepackage{amsmath}
\usepackage{amssymb}
\usepackage{mathtools}

\usepackage{amsthm}

\usepackage{physics}
\usepackage{braket}
\usepackage{dsfont}

\usepackage{subfigure}
\usepackage{caption}

\usepackage{epstopdf}
\usepackage{multirow}
\usepackage{float}
\usepackage{mathtools}
\usepackage{graphicx}
\usepackage{array}
\usepackage{mdwmath}
\usepackage{mdwtab}
\usepackage{eqparbox}
\usepackage{nomencl}
\usepackage{cite}
\usepackage{algorithm}
\usepackage{algorithmic}
\usepackage{makeidx}
\usepackage{ifthen}
\usepackage{subfigure}
\usepackage{caption}

\usepackage{algorithm,algorithmic}

\theoremstyle{plain}
\newtheorem{theorem}{Theorem}[section]
\newtheorem{lemma}[theorem]{Lemma}

\newtheorem{proposition}[theorem]{Proposition}

\newtheorem{assumption}[theorem]{Assumption}
\theoremstyle{definition}
\newtheorem{definition}[theorem]{Definition}

\newtheorem{remark}[theorem]{Remark}

 \title{\LARGE \bf Projected Variational Quantum Extragradient for Zero-Sum Games \vspace{-0.4cm}}
\author{Duong The Do$^{1}$, Matthew Aldridge$^{2}$ and Duong Tung Nguyen$^{1}$ \vspace{-0.55cm}
}

\begin{document}

\maketitle
\thispagestyle{empty}
\pagestyle{empty}

\begin{abstract}
 We propose a projected variational quantum extragradient (VQEG) framework for computing approximate Nash equilibria in two-player zero-sum matrix games. Mixed strategies are parameterized as Born distributions of parameterized quantum circuits (PQCs), transforming the classical bilinear saddle-point problem into a smooth but generally nonconvex–nonconcave min–max optimization in circuit-parameter space. The expected payoff is expressed as the expectation of a diagonal observable, enabling gradient evaluation via the parameter-shift rule and compatibility with shot-based quantum hardware.
 To support arbitrary game sizes, we introduce a dominated embedding that maps $(m,n)$ games to qubit-compatible power-of-two dimensions while preserving equilibrium structure. We then develop a projected extragradient method using stochastic gradient estimates derived from finite measurement shots, and establish variance bounds scaling as $O(1/S)$ with respect to the number of measurement shots $S$, along with convergence to approximate first-order stationarity under standard assumptions. Since stationarity does not guarantee equilibrium optimality, we evaluate performance using the game-space Nash gap. Numerical results demonstrate high-precision solutions on structured instances up to $32 \times 32$, while highlighting challenges in unstructured settings.
\end{abstract}

\section{INTRODUCTION}
\label{sec:Introduction}
Two-player zero-sum games are a widely used model for competitive decision-making and arise in a wide range of applications, including cybersecurity \cite{YanZhou_2021}, adversarial machine learning \cite{Moghadam_2022}, robust control \cite{ZongyangJiang2025}, online learning and operations research \cite{Elif2025}, and reinforcement learning for competitive multi-agent systems \cite{YuanhengZhu_2020}. The central solution concept in these games is the Nash equilibrium (NE), which characterizes a strategy profile in which no player can improve its payoff through unilateral deviation \cite{Binmore2007}. In finite zero-sum games, NE always exist and correspond to minimax and maximin strategies \cite{MartinOsborne2004}. This equivalence connects equilibrium computation with saddle-point optimization and motivates iterative methods for solving zero-sum games.

Although equilibria in zero-sum games can be computed exactly via linear programming, many modern applications involve large-scale or oracle-based settings where direct methods become impractical. This has motivated the development of scalable first-order methods, including extragradient, mirror-prox, and optimistic gradient algorithms, which exploit the saddle-point structure and provide strong convergence guarantees \cite{Nemirovski_2004, DaskalakisPanageas2019, Mokhtari2020, Cen2024, Feng2024}. 
In parallel, no-regret learning dynamics offer an alternative approach by interpreting equilibrium computation as repeated play \cite{Abe2023, Cai2025}. 
There is also growing interest in last-iterate convergence to obtain stronger practical guarantees for equilibrium computation, since practical performance depends on the actual strategies generated by the algorithm rather than only on averaged iterates \cite{Abe2023, Feng2024, Chen2024}. Also, residual-based methods that directly optimize equilibrium measures, such as the duality gap, have also been studied for approximating equilibria in zero-sum games \cite{Fasoulakis2025}. These highlight the importance of designing scalable algorithms that can exploit problem structure while remaining practical for complex game settings. 

Despite these advances, most existing methods operate directly in the strategy space, which can become high-dimensional and computationally expensive.
These challenges have motivated the study of alternative computational frameworks that can represent and optimize player strategies.
In this work, we propose a variational quantum approach for finite two-player zero-sum games. Our approach leverages PQCs to represent mixed strategies via Born distributions, inducing a smooth mapping from circuit parameters to probability simplices \cite{Wierichs2022}. Under this representation, the classical bilinear objective can be expressed as the expectation of a diagonal observable, leading to a parametric min–max optimization problem in circuit-parameter space. However, this transformation introduces nonlinearity, resulting in a generally nonconvex-nonconcave saddle-point problem.


To address this challenge, we develop a projected variational quantum extragradient (VQEG) method that performs stable saddle-point updates in parameter space using parameter-shift gradient estimators compatible with shot-based quantum hardware. To enable qubit-compatible representations, we introduce a dominated embedding that maps arbitrary $(m,n)$ games to power-of-two dimensions while preserving equilibrium structure. We further analyze the stochastic optimization process and establish convergence to approximate first-order stationarity under standard smoothness and bounded-variance assumptions.
Since stationarity in parameter space does not necessarily imply equilibrium optimality in the original game, we evaluate the induced strategies using the game-space Nash gap as an empirical certificate.
Our main contributions are summarized below.
\begin{itemize}
    \item We introduce a variational quantum parameterization of mixed strategies for zero-sum matrix games using PQC-induced Born distributions and the corresponding expected payoff is expressed in an observable form.
    \item We develop an equilibrium-preserving dominated embedding that enables qubit-compatible representations of arbitrary game sizes. 
    \item We propose a \emph{variational quantum extragradient} (VQEG) algorithm that updates the circuit parameters through a two-step extragradient procedure using parameter-shift gradient estimators, and we analyze the resulting circuit-evaluation and shot complexity.
    \item We prove convergence to approximate first-order \emph{stationarity} in parameter space under standard smoothness and bounded-variance assumptions, and we assess equilibrium in the original game space via the Nash gap.

\item Experimental results are conducted to validate the efficacy of the proposed VQEG framework. 
\end{itemize}

The rest of the paper is organized as follows. Section~\ref{sec:Prelimiary} reviews preliminaries on finite two-player zero-sum games and the equilibrium certificates used in this work. Section~\ref{sec:QVEG} presents the proposed VQEG method. 
Section~\ref{sec:Convergence} provides the convergence analysis. 
Section~\ref{sec:Results} presents numerical results, and Section~\ref{sec:Conclusion} concludes the paper.

\section{PRELIMINARIES: ZERO-SUM GAMES AND EQUILIBRIUM CERTIFICATES}
\label{sec:Prelimiary}
\subsection{Finite Zero-sum Games}

We consider a finite two-player zero-sum matrix game defined by a payoff matrix $A \in \mathbb R^{m\times n}$ , where $m$ and 
$n$ denote the numbers of pure strategies (actions) available to the row player (Player 1) and the column player (Player 2), respectively. The column player’s payoff matrix is given by 
$-A$.
  Let $ \Delta_s := \{p \in \mathbb R^s_{\ge 0}: \mathbf{1}^\top p = 1\}, s \in \left\{m, n \right\}$ denote the probability simplex. The players choose mixed strategies $x \in \Delta_m$ and $y \in \Delta_n$, and the expected payoff to  the row player is given by $f(x,y) \;:=\; \left\langle x, Ay \right\rangle = x^\top A y$.
It is easy to see that the function $f(x,y)$ is bilinear and bounded between the minimum and maximum entries of $A$, i.e., we have: 
    \begin{align}
    \label{le:bilinear_payoff}
        & \min_{i \in [m],\, j \in [n]} A_{ij} \;\le\; f(x,y) \;\le\; \max_{i \in [m],\, j \in [n]} A_{ij}.
    \end{align}
    
\begin{definition}[Nash equilibrium]
A strategy profile $(x^\star,y^\star)$ is an NE (equivalently, a saddle point) if:
    \begin{align}
    \label{eq:NE_def}
        & f(x^\star, y) \ge f(x^\star, y^\star) \ge f(x,y^\star), \; \forall x \in \Delta_m, \forall y\in\Delta_n.
    \end{align}
\end{definition}
By von Neumann's minimax theorem, the game admits a value, given by:
\begin{align}
\label{eq:game_value}
    &  \max_{x\in\Delta_m}\min_{y\in\Delta_n} f(x,y) = \min_{y\in\Delta_n}\max_{x\in\Delta_m} f(x,y) = v^\star, 
\end{align}
and every Nash equilibrium attains this value. Thus, computing equilibria in zero-sum games is equivalent to solving a saddle-point problem $\max_{x\in\Delta_m}\min_{y\in\Delta_n} f(x,y)$.

\subsection{Approximate Equilibrium and Nash Gap}
For any strategy profile $(x,y) \in \Delta_m \times \Delta_n$, let $\alpha(y)=\max_{x'\in\Delta_m} \langle x', Ay \rangle$ denote the row player's best-response payoff against $y$ and $\beta(x) = \min_{y' \in \Delta_n} \langle x, Ay' \rangle$ be the column player's best-response payoff against $x$.
The unilateral deviation gains for the row and column players are given by:
\begin{align}
\label{eq:dev_gain}
    \delta_{\mathrm{r}}(x,y) \!=\! \alpha(y) \!-\! \left\langle x, Ay \right\rangle, \; \delta_{\mathrm{c}}(x,y) \!=\! \left\langle x, Ay \right\rangle \!-\! \beta(x),
\end{align}
where $\delta_{\mathrm{r}}(x,y)$ is the maximum payoff improvement available to the row player by deviating from $x$ while the column player keeps $y$ fixed. Similarly, $\delta_{\mathrm{c}}(x,y)$ is the maximum improvement available to the column player by deviating from $y$ while the row player keeps $x$ fixed. 
By weak minimax duality, for all $(x,y)$ we have $\beta(x) \leq v^\star \leq \alpha(y)$. 
\begin{definition}[Nash gap]
    For any $(x,y) \in \Delta_m \times \Delta_n$, the Nash gap (or duality gap) is defined as:
    \begin{align}
    \label{eq:NE_gap}
        & \mathcal{G}(x,y) := \alpha(y) - \beta(x), 
    \end{align}
which can be equivalently written as
    \begin{align}
    \label{eq:NE_gap_eq}
        & \mathcal{G}(x,y) = \max_{i \in [m]} (Ay)_i - \min_{j\in[n]} (x^\top A)_j.
    \end{align}
\end{definition}

\begin{definition}[$\varepsilon$-NE]
    $(x^\star,y^\star)$ is an $\varepsilon$-NE if no player can gain more than $\varepsilon$ by unilateral deviation, i.e., we have:
    \begin{align}
        \begin{array}{cc}
        \left\langle x^\star, Ay^\star \right\rangle \geq & \max_{x \in \Delta_m} \left\langle x, Ay^\star \right\rangle - \varepsilon, \\
        \left\langle x^\star, Ay^\star \right\rangle \leq & \min_{y \in \Delta_n} \left\langle x^\star, Ay \right\rangle + \varepsilon,
        \end{array}
    \end{align}
    In other words, $\max\left\{\delta_{\mathrm{r}}(x^\star,y^\star), \delta_{\mathrm{c}}(x^\star,y^\star) \right\} \le \varepsilon$.
\end{definition}

It is observed that the deviation of both players can be expressed by using the Nash gap as, $\mathcal{G}(x,y) = \delta_{\mathrm{r}}(x,y) + \delta_{\mathrm{c}}(x,y)$, which satisfies $\max \{\delta_{\mathrm{r}}, \delta_{\mathrm{c}}\} \le \mathcal{G}(x,y) \le 2\max \{\delta_{\mathrm{r}}, \delta_{\mathrm{c}}\}$. Thus, the gap  upper-bounds both players’ incentives to deviate serving as a computable certificate of approximate optimality for zero-sum games \cite{Nemirovski_2004}. In particular, $\mathcal{G}(x,y) = 0$ if and only if $(x,y)$ is an exact NE, and $\mathcal{G}(x,y) \le \varepsilon$ implies $(x,y)$ is an $\varepsilon$-NE. 


\section{VARIATIONAL QUANTUM EXTRAGRADIENT}
\label{sec:QVEG}
\subsection{Equilibrium-Preserving Dominated Embedding}
Variational quantum circuits typically represent distributions over $2^q$ computational basis outcomes. To support arbitrary game sizes $(m, n)$, we propose to embed the original game into a $(M, N)$ game, where $M := 2^{\lceil\log_2 m\rceil}$ and $N := 2^{\lceil\log_2 n\rceil}$. Rather than using the naive zero-padding that may change the equilibrium value, we use an embedding that preserves the original game by ensuring the extra (\textit{dummy}) actions such that all added actions are \emph{strictly dominated} and therefore receive zero probability at equilibrium. 
Let $\|A\|_\infty = \max_{i,j}|A_{ij}|$, and choose 
$C>\|A\|_\infty$. We define the dominated embedding $\widetilde A\in\mathbb{R}^{M\times N}$ as follows:
\begin{align}
\label{eq:dom_embed}
    \widetilde A_{ij} = 
    \begin{cases} 
        A_{ij}, & i\le m,\ j\le n,\\
        -C, & i>m,\ j\le n,\\
        +C, & i\le m,\ j>n,\\
        0, & i>m,\ j>n.
    \end{cases}
\end{align}

\begin{proposition}[Dominated embedding equivalence]
\label{prop:matrix_equiv}
    Let $C>\|A\|_\infty$ and let $\widetilde A$ be defined by~\eqref{eq:dom_embed}. Then the following statements hold:
    \begin{enumerate}
        \item Every NE $(\widetilde x^\star, \widetilde y^\star) \in \Delta_M \times \Delta_N$ of the embedded game satisfies $\sum_{i>m}\widetilde x^\star_i = 0, \, \sum_{j>n}\widetilde y^\star_j = 0$.
        \item The restriction $(x^\star, y^\star) = (\widetilde x^\star_{1:m}, \widetilde y^\star_{1:n})$ is a NE of the original game $(A,\Delta_m, \Delta_n)$.
        \item Conversely, if $(x^\star,y^\star)$ is a NE of $(A, \Delta_m, \Delta_n)$, then the extension $(\widetilde x^\star, \widetilde y^\star)$ created by appending zeros, $\widetilde x^\star = (x^\star, 0) \in \Delta_M$ and $\widetilde y^\star = (y^\star, 0) \in \Delta_N$, is a NE of $(\widetilde A, \Delta_M, \Delta_N)$.
    \end{enumerate}
\end{proposition}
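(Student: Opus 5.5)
The plan is to show that every dummy action is \emph{strictly dominated} by every original action, use this to derive the three claims, and reduce all equilibrium checks to pure-strategy deviations via bilinearity of $\widetilde f(\widetilde x,\widetilde y) := \widetilde x^\top \widetilde A \widetilde y$. Throughout I use $C>\|A\|_\infty$, so $-C < A_{ij} < C$ for all $i\le m$, $j\le n$.

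\textbf{Domination.} Recall that Player~1 maximizes and Player~2 minimizes.
\begin{itemize}
\item Fix a dummy row $i>m$ and any original row $k\le m$. For $j\le n$ we have $\widetilde A_{kj}=A_{kj}>-C=\widetilde A_{ij}$. For $j>n$ we have $\widetilde A_{kj}=C>0=\widetilde A_{ij}$. Hence row $k$ strictly beats row $i$ entrywise.
\item Fix a dummy column $j>n$ and any original column $l\le n$. For $i\le m$ we have $\widetilde A_{il}=A_{il}<C=\widetilde A_{ij}$. For $i>m$ we have $\widetilde A_{il}=-C<0=\widetilde A_{ij}$. Hence column $l$ is strictly smaller than column $j$ entrywise, which is better for the minimizer.
\end{itemize}

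\textbf{Part 1.} Let $(\widetilde x^\star,\widetilde y^\star)$ be an NE and suppose $\widetilde x^\star_i>0$ for some $i>m$. Move that mass onto row $1$ to obtain $x'$. Then
\[
\widetilde f(x',\widetilde y^\star)-\widetilde f(\widetilde x^\star,\widetilde y^\star)=\widetilde x^\star_i\sum_j(\widetilde A_{1j}-\widetilde A_{ij})\widetilde y^\star_j .
\]
Each difference $\widetilde A_{1j}-\widetilde A_{ij}$ is strictly positive, and $\widetilde y^\star$ is a probability vector, so this quantity is $>0$. That contradicts the right inequality in~\eqref{eq:NE_def}. The symmetric argument, shifting dummy column mass onto column $1$, gives $\sum_{j>n}\widetilde y^\star_j=0$.

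\textbf{Part 2.} By Part~1, $x^\star:=\widetilde x^\star_{1:m}\in\Delta_m$ and $y^\star:=\widetilde y^\star_{1:n}\in\Delta_n$, and $\widetilde f(\widetilde x^\star,\widetilde y^\star)=f(x^\star,y^\star)$. For any $x\in\Delta_m$, its zero-padding $(x,0)\in\Delta_M$ satisfies $\widetilde f((x,0),\widetilde y^\star)=f(x,y^\star)$. The NE inequality in the embedded game then yields $f(x,y^\star)\le f(x^\star,y^\star)$. The column-player inequality follows the same way.

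\textbf{Part 3.} This is the only step needing a little care, because deviations onto dummy actions must be excluded. By bilinearity it suffices to check pure deviations.
\begin{itemize}
\item For an original row $i\le m$: $(\widetilde A\widetilde y^\star)_i=(Ay^\star)_i\le v^\star$, since $(x^\star,y^\star)$ is an NE with value $v^\star$ by~\eqref{eq:game_value}.
\item For a dummy row $i>m$: $(\widetilde A\widetilde y^\star)_i=-C\sum_{j\le n}y^\star_j=-C$. This is less than $v^\star$ because $v^\star\ge\min_{ij}A_{ij}>-C$ by~\eqref{le:bilinear_payoff}.
\end{itemize}
Hence $\max_{\widetilde x}\widetilde f(\widetilde x,\widetilde y^\star)=v^\star=\widetilde f(\widetilde x^\star,\widetilde y^\star)$. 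Symmetrically, for dummy columns $j>n$ we get $((\widetilde x^\star)^\top\widetilde A)_j=C>\max_{ij}A_{ij}\ge v^\star$, while original columns give $(x^{\star\top}A)_j\ge v^\star$. Both NE inequalities therefore hold, and $(\widetilde x^\star,\widetilde y^\star)$ is an NE of the embedded game.

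The main potential pitfall is the sign bookkeeping for the block $i>m$, $j>n$ with entry $0$. Strict domination still holds there only because the off-diagonal blocks are $\pm C$ with $C>0$, which is implied by $C>\|A\|_\infty\ge0$.
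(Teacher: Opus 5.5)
Your proof is correct and follows the strict-domination mechanism the paper states when introducing the embedding: dummy actions are strictly dominated and so receive zero equilibrium mass. Restriction and zero-padding then transfer the equilibrium inequalities, and the converse uses pure-deviation checks against $-C < v^\star < C$. The paper defers its formal proof to the technical report, but your argument has no gap, and you correctly note that strict domination in the $0$ block relies on $C>\|A\|_\infty\ge 0$.
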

\begin{proof}
    Please refer to our technical report \cite{duong2026vqegtech}. 
\end{proof}

\subsection{PQC Parameterization of Mixed Strategies}
Let $\mathcal{H}_r \simeq \mathbb{C}^{M}$ and $\mathcal{H}_c \simeq \mathbb{C}^{N}$ denote the Hilbert spaces associated with the row and column players, respectively. We represent each player's mixed strategy as a Born distribution generated by a PQC. Specifically, the row player uses an unitary $U_{\theta}:\mathcal{H}_r \to \mathcal{H}_r$ acting on $q_r=\lceil \log_2 M \rceil$ qubits with parameters $\theta\in\mathbb{R}^{d_r}$, while the column player uses a unitary circuit $V_{\phi}:\mathcal{H}_c \to \mathcal{H}_c$ acting on $q_c=\lceil \log_2 N \rceil$ qubits with parameters $\phi \in \mathbb{R}^{d_c}$.

For circuit construction, in our implementation, both players employ a hardware-efficient ansatz composed of alternating layers of single-qubit rotations and entangling gates. Specifically, we apply $\mathrm{R}_y$ and $\mathrm{R}_z$ rotations to every qubit, thereby enabling full local single-qubit expressivity, and subsequently implement a $\mathrm{CZ}$ gate in a ring topology to establish nearest-neighbor entanglement across the register.
The single-qubit rotation $\mathrm{R}_y$ and $\mathrm{R}_z$ are defined as: 
\begin{align}
\label{eq:Ry}
    & \!\! R_y(\varphi) = \!
    \begin{bmatrix} 
        \cos(\frac{\varphi}{2}) & \!\!\!-\sin(\frac{\varphi}{2}) \\ 
        \sin(\frac{\varphi}{2}) & \!\!\!\cos(\frac{\varphi}{2}) 
    \end{bmatrix},
    R_z(\theta) = \!
    \begin{bmatrix} 
        e^{-i\frac{\theta}{2}} & \!\!0 \\ 
        0 & \!\!e^{i\frac{\theta}{2}} 
    \end{bmatrix} \!.
\end{align}
For a circuit with $q$ qubits and $L$ layers, this ansatz contains $2qL$ trainable parameters per player. Thus, the parameter count grows linearly with both the number of qubits and the circuit depth, providing a controlled trade-off between expressivity and implementation on near-term devices.
Additionally, for parameterized unitaries $U_\theta \!\in\! \mathcal{U}(\mathcal{H}_r)$ and $V_\phi \! \in \! \mathcal{U}(\mathcal{H}_c)$, we define the corresponding density operators:
\begin{align}
\label{eq:qmp_states}
    & \rho_\theta := U_\theta \rho_0 U_\theta^\dagger \in \mathbb{C}^{M \times M}, \; \sigma_\phi := V_\phi \sigma_0 V_\phi^\dagger \in \mathbb{C}^{N \times N},
\end{align}
where the initial states are chosen as pure states, $\rho_0 = \sigma_0 =|0\rangle\langle 0|$. Thus, each player’s strategy is represented by a variationally prepared quantum state.
Let $\{\Pi_i\}_{i=1}^{M}$ and $\{\Omega_j\}_{j=1}^{N}$ be computational-basis projectors, defined by $\Pi_i := |i \rangle \langle i| \in \mathbb{C}^{M \times M}$ and $\Omega_j := |j \rangle \langle j| \in \mathbb{C}^{N \times N}$. 
Measurement in the computational basis induces classical mixed strategies $x_\theta \in \Delta_M$ and $y_\phi \in \Delta_N$, whose components are given by the corresponding outcome probabilities:
\begin{align}
\label{eq:born_strat}
    & x_\theta(i) = \Tr(\rho_\theta \Pi_i), \ y_\phi(j) = \Tr(\sigma_\phi \Omega_j).
\end{align}
Hence, $x_\theta \in \Delta_M$ and $y_\phi \in \Delta_N$ are probability distributions over the pure action space, parameterized smoothly by $\theta$ and $\phi$ through the variational circuits. The quantum circuits thus provide smooth parameterizations of the classical simplices. 

For any $y\in\Delta_N$ and $x\in\Delta_M$, we further define the diagonal operator on the Hilbert spaces $\mathcal{H}_r$ and $\mathcal{H}_c$ as:
\begin{align}
\label{eq:qmp_Dy_Ex}
    & D(y) = \! \sum_{i \in [M]} \! (\widetilde A y)_i \Pi_i, \; \; E(x) = \! \sum_{j \in [N]} \! (\widetilde A^\top x)_j \Omega_j.
\end{align}
These observables encode the expected payoff contributions induced by the opponent’s mixed strategy. The embedded expected payoff induced by $(\theta,\phi)$ can then be written as:
\begin{align}
\label{eq:qmp_obj}
    & \! \mathcal{L}(\theta,\phi) \!=\! \langle x_{\theta}, \widetilde{A}\,y_{\phi} \rangle \!= \Tr\!\big( \rho_\theta D(y_\phi) \big) \!= \Tr\!\big( \sigma_\phi E(x_\theta) \big).
\end{align}
Thus, the bilinear classical payoff is embedded as the expectation value of a circuit-parameterized diagonal observable circuit-prepared quantum state, offering a correspondence between the game matrix and variational quantum optimization. 

Furthermore, the row player aims to maximize $\mathcal{L}$ with respect to $\theta$, while the column player seeks to minimize it with respect to $\phi$. This leads to the following parametric saddle-point problem:
\begin{align}
\label{eq:minmax_param}
    & \max_{\theta\in\Theta}\ \min_{\phi\in\Phi}\ \mathcal{L}(\theta,\phi),
\end{align}
where $\Theta\subset\mathbb{R}^{d_r}$ and $\Phi\subset\mathbb{R}^{d_c}$ are bounded sets used to ensure stability and enable projected gradient updates.

Let's denote $\omega = (\theta, \phi) \in \mathbb{R}^d$ with $d=d_r+d_c$ and define the parametric saddle operator as:
\begin{align}
\label{eq:param_operator}
    G(\omega) 
    &= \begin{bmatrix}
            \nabla_\theta \mathcal{L}(\theta,\phi) \\[4pt]
            -\nabla_\phi \mathcal{L}(\theta,\phi)
    \end{bmatrix}
    = \begin{bmatrix}
        \left(\frac{\partial \mathcal{L}}{\partial \theta_1}, \dots, \frac{\partial \mathcal{L}}{\partial \theta_{d_r}}\right)^\top \\[6pt]
        -\left(\frac{\partial \mathcal{L}}{\partial \phi_1}, \dots, \frac{\partial \mathcal{L}}{\partial \phi_{d_c}}\right)^\top
    \end{bmatrix}.
\end{align}
A stationary point $\omega^\star$ satisfies $G(\omega^\star) = 0$, corresponding to a first-order saddle point of the variational game. Since the mappings $\theta \mapsto x_\theta$ and $\phi \mapsto y_\phi$ induced by the variational circuits are typically nonlinear, the objective function $\mathcal{L}(\theta,\phi)$ is generally nonconcave in $\theta$ and nonconvex in $\phi$. Consequently, the associated gradient $G$ is generally non-monotone. Accordingly, VQEG can be viewed as a stabilizing first-order method for solving the parametric min--max problem, which is associated with extragradient and optimistic gradient methods.

\subsection{Parameter-shift gradient estimation}


Let's define $\theta_k^{\pm} = \theta \pm \frac{\pi}{2}e_k$ as shifted row parameters, where $e_k$ denotes the $k^{\text{th}}$ standard basis vector in the row-player parameter space. For each coordinate $\theta_k$, the parameter-shift rule can be then defined as follows:
\begin{align}
\label{eq:ps_theta}
    \frac{\partial \mathcal{L}(\theta,\phi)}{\partial \theta_k}
    &= \frac12 \Big( \mathcal{L}(\theta_k^{+}, \phi) - \mathcal{L}(\theta_k^{-}, \phi) \Big) \\
    &= \frac12 \Big( \Tr(\rho_{\theta_k^{+}} D(y_\phi))- \Tr(\rho_{\theta_k^{-}} D(y_\phi)) \Big), \nonumber 
\end{align}
where $\rho_{\theta_k^{\pm}}$ are the shifted operators obtained from the corresponding parameterized density \eqref{eq:qmp_states}. Equivalently:
\begin{align}
\label{eq:ps_theta_final}
    \frac{\partial \mathcal{L}(\theta,\phi)}{\partial \theta_k} & = \frac12 \Big( \Tr(U_{\theta_k^{+}} \ket{0}\! \bra{0}U_{\theta_k^{+}}^\dagger D(y_\phi)) 
    \\
    & \quad - \Tr(U_{\theta_k^{-}}\ket{0}\!\bra{0}U_{\theta_k^{-}}^\dagger D(y_\phi)) \Big). \nonumber 
\end{align}
Similarly, for each column-player parameter $\phi_\ell$, we can obtain the partial derivative as follows:
\begin{align}
\label{eq:ps_phi_final}
    \frac{\partial \mathcal{L}(\theta,\phi)}{\partial \phi_\ell} 
    & = \frac12 \Big( \Tr(V_{\phi_\ell^{+}} \ket{0} \! \bra{0} V_{\phi_\ell^{+}}^\dagger E(x_\theta)) 
    \\
    & \quad - \Tr(V_{\phi_\ell^{-}} \ket{0} \! \bra{0} V_{\phi_\ell^{-}}^\dagger E(x_\theta) ) \Big), \nonumber 
\end{align}
where $\phi_\ell^{\pm}=\phi\pm\frac{\pi}{2}e_\ell$, and $e_\ell$ denotes the $\ell^{\text{th}}$ standard basis vector in the column-player parameter space.
Each evaluation of $\mathcal{L}$ in~\eqref{eq:ps_theta_final}--\eqref{eq:ps_phi_final} corresponds to the expectation of a diagonal observable and can be estimated using shot-based sampling of the corresponding circuit. The estimator of $\mathcal{L}$ is unbiased, and therefore the resulting gradient estimator obtained via the parameter-shift rule is also unbiased.


\emph{Shot-based estimator:}
Suppose that each expectation value is estimated using $S$ measurement shots. For a fixed shifted parameter vector $\theta_k^{\pm}$, let $\widehat{p}i^{\pm} = \frac{n_i^{\pm}}{S}$ denote the empirical frequency of outcome $i \in [M]$ obtained by measuring the row circuit $U{\theta_k^{\pm}}|0 \rangle$ in the computational basis, where $n_i^{\pm}$ is the number of occurrences of outcome $i$ ($n_i^{\pm} \sim \mathrm{Binomial}(S, x_{\theta_k^{\pm}}(i))$ ). It follows that:
\begin{align}
\label{eq:freq_x_theta}
    & \mathbb{E}[\widehat p_i^{\pm}] = x_{\theta_k^\pm}(i),\\
    & \mathrm{Var}(\widehat p_i^{\pm}) = \frac{1}{S} x_{\theta_k^\pm}(i) \big(1-x_{\theta_k^\pm}(i) \big) \le \frac{1}{4S},
\end{align}
implying the frequency is an unbiased estimator. An unbiased estimator of the shifted payoff $\mathcal{L}(\theta_k^{\pm}, \phi)$ is, thus:
\begin{align}
\label{eq:qmp_obj_estimate}
    \widehat{\mathcal{L}}(\theta_k^{\pm}, \phi) = \sum_{i=1}^{M}\widehat p_i^{\pm}\,(\widetilde A y_\phi)_i = \sum_{j=1}^{N}y_\phi(j)\sum_{i=1}^{M}\widehat p_i^{\pm}\widetilde A_{ij},
\end{align}
Since $\mathbb{E}[\widehat{p}_i^{\pm}] = x_{\theta_k^{\pm}}(i)$, it follows by linearity that $\mathbb{E}\!\left[\widehat{\mathcal{L}}(\theta_k^\pm,\phi)\mid y_\phi \right] = \sum_{i=1}^{M} \mathbb{E} [\widehat p_i^\pm](\widetilde A y_\phi)_i = \mathcal{L}(\theta_k^{\pm}, \phi)$. Moreover, $\widehat{\mathcal{L}}(\theta_k^\pm, \phi) = \sum_{i=1}^M \widehat p_i^\pm(\widetilde A y_\phi)_i$
is a sample average of bounded weights $(\widetilde A y_\phi)_i$ under $S$ shots.
Since $|(\widetilde A y_\phi)_i| \le \|\widetilde A y_\phi \|_\infty \le \|\widetilde A \|_\infty$ for all $i$, it follows that:
\begin{align}
\label{eq:qmp_obj_estimate_var}
    & \mathrm{Var} \! \left(\widehat{\mathcal{L}}(\theta_k^\pm, \phi) \right) 
    \le \frac{\|\widetilde A y_\phi\|_\infty^2}{S}
    \le \frac{\|\widetilde A\|_\infty^2}{S}.
\end{align}
Additionally, the corresponding empirical parameter-shift gradient estimator is given by: 
\begin{align}
\label{eq:ps_theta_estimate}
    & \widehat{\partial_{\theta_k}\mathcal{L}}(\theta, \phi) = \frac{1}{2} \Big( \widehat{\mathcal{L}}(\theta_k^{+}, \phi) - \widehat{\mathcal{L}}(\theta_k^{-}, \phi) \Big).
\end{align}
Since each shifted payoff estimator $\widehat{\mathcal{L}}(\theta_k^\pm, \phi)$ is unbiased, linearity of expectation implies that the parameter-shift gradient estimator is also unbiased. In particular, we have:
\begin{align}
\label{eq:var_grad_row}
    & \mathbb{E} \! \left[\widehat{\partial_{\theta_k} \mathcal{L}}(\theta, \phi) \mid y_\phi \right] = \partial_{\theta_k} \mathcal{L}(\theta,\phi).
\end{align}
Under independent shot sampling with $S$ shots per shifted circuit evaluation, the conditional variance satisfies: 
\begin{align}
\label{eq:var_grad_row}
    \mathrm{Var} \! \left(\widehat{\partial_{\theta_k}\mathcal{L}}(\theta, \phi) \right)
    &= \mathrm{Var} \! \left(\tfrac12(\widehat{\mathcal{L}}(\theta_k^+, \phi) - \widehat{\mathcal{L}}(\theta_k^-, \phi)) \right) \\
    &\le \frac{\|\widetilde A\|_\infty^2}{2S}. \nonumber
\end{align}

Similarly, 
we can obtain the variance of the parameter-shift estimator for the column player as follows:
\begin{align}
    & \mathrm{Var} \! \left(\widehat{\partial_{\phi_\ell} \mathcal{L}}(\theta, \phi) \mid x_\theta \right) \le \frac{\|\widetilde A\|_\infty^2}{2S}.
\label{eq:unbiased_var_grad_col}
\end{align}
With $S$ shots per evaluation and bounded payoff weights (e.g., $\|\widetilde A\|_\infty<\infty$), the variance of each payoff estimate scales as $O(1/S)$, and the variance of each gradient coordinate scales as $O(1/S)$.
Consequently, for the saddle-gradient estimator $\widehat G(\omega)$ with $d$ parameters, the following holds:
\begin{align}
    \mathbb{E}\!\left[\|\widehat G(z)-G(z)\|_2^2 \mid z\right] \le \frac{d}{2S}\,\|\widetilde A\|_\infty^2.
\end{align}

\subsection{Projected Quantum Extractgradient Update}
Extragradient methods are designed for saddle-point and adversarial optimization problems. Rather than updating parameters using only the gradient at the current iterate, they employ a two-step predictor–corrector scheme. Specifically, a predictor step is first computed using the current gradient, and the final update is then obtained using the gradient evaluated at this extrapolated point. This mechanism mitigates cycling behavior and improves stability compared to standard single-step gradient methods.

Let $\mathcal{W}=\Theta\times\Phi \subset \mathbb{R}^d$ be a convex compact set, and let $\Pi_{\mathcal W}$ denote the Euclidean projection onto $\mathcal W$. Given the current iterate $\omega_t=(\theta_t,\phi_t)$, a stepsize $\eta>0$, and a parameter-shift gradient estimator $\widehat G(\omega)$, the projected VQEG update is defined as follows:
\begin{align}
\label{eq:eg_1_2}
    \omega_{t+\frac12}
    &= \Pi_{\mathcal W}\!\left(\omega_t-\eta_t \widehat G(\omega_t)\right), \\
\label{eq:eg_1}
    \omega_{t+1}
    &= \Pi_{\mathcal W}\!\left(\omega_t-\eta_t \widehat G(\omega_{t+\frac12})\right).
\end{align}
Thus, each iteration uses two operator evaluations: one at the current point to form the predictor, and one at the extrapolated point to construct the corrected update. The detailed procedure of the proposed VQEG is in Algorithm~\ref{alg:qeg}.

\begin{algorithm}[t]
\caption{Variational Quantum Extragradient Algorithm}
\label{alg:qeg}
\begin{algorithmic}[1]
\REQUIRE $A\in\mathbb{R}^{m\times n}$; steps $T$; stepsizes $\{\eta_t\}$; $C>\|A\|_\infty$; PQCs $U_\theta$ ($q_r$ qubits) and $V_\phi$ ($q_c$ qubits); projection set $\mathcal{W}\subset\mathbb{R}^{d_r+d_c}$; shots per shifted evaluation $S$.
\STATE Set $M=2^{\lceil\log_2 m\rceil}$, $N=2^{\lceil\log_2 n\rceil}$; construct $\widetilde A$ via~\eqref{eq:dom_embed}.
\STATE Initialize $\omega_0=(\theta_0,\phi_0)\in\mathcal{W}$.
\FOR{$t=0,1,\dots,T-1$}
\STATE Estimate $\widehat G(\omega_t)$ using~\eqref{eq:ps_theta}--\eqref{eq:ps_phi_final} with $S$ shots per shifted circuit.
\STATE Predictor: $\omega_{t+\frac12} \leftarrow \Pi_{\mathcal{W}}(\omega_t - \eta_t \widehat G(\omega_t))$.
\STATE Estimate $\widehat G(w_{t+\frac12})$ using parameter-shift.
\STATE Corrector: $\omega_{t+1} \leftarrow \Pi_{\mathcal{W}}(\omega_t - \eta_t \widehat G(w_{t+\frac12}))$.
\STATE Evaluate $(x_{\theta_{t+1}},y_{\phi_{t+1}})$ and Nash gap certificate.
\ENDFOR
\STATE Return $\omega_T$ and/or an average of iterates; report certificate values.
\end{algorithmic}
\end{algorithm}

\section{CONVERGENCE ANALYSIS} 
\label{sec:Convergence}

\subsection{Assumptions}

\begin{assumption}
\label{ass:domain}
    $\mathcal{W}$ is compact and convex with diameter $d_w := \sup_{u,v \in\mathcal{W}} \|u-v\|_2 < \infty$.
\end{assumption}

\begin{assumption}
\label{ass:smooth}
    $\mathcal{L}$ is differentiable and $\nabla \mathcal{L}$ is $L$-Lipschitz:
    $\|\nabla\mathcal{L}(\omega) - \nabla\mathcal{L}(\omega')\|_2\le L\|\omega - \omega'\|_2$ for all $\omega, \omega' \in \mathcal{W}$.
\end{assumption}

\begin{assumption}
\label{ass:oracle}
    The gradient estimator $\widehat G(\omega)$ satisfies $\mathbb{E}[\widehat G(\omega) \mid \omega] = G(\omega)$ and $\mathbb{E}[\|\widehat G(\omega) - G(\omega)\|_2^2 \mid \omega] \le \sigma^2$ for all $\omega \in \mathcal{W}$.
\end{assumption}


\subsection{Stationary Convergence}
For any step size $\eta>0$, define the projected residual as:
\begin{equation}
\label{eq:proj_residual}
    \mathcal{R}_\eta(\omega) = \frac{1}{\eta}\Big(\omega - \Pi_{\mathcal{W}} \big(\omega - \eta \, G(\omega) \big) \Big).
\end{equation}
Then $\mathcal{R}_\eta(\omega) = 0$ if and only if $\omega$ is a first-order stationary point of the projected parametric game. Accordingly, $\| \mathcal{R}_\eta(\omega) \|_2$ serves as our stationarity measure.


\begin{lemma}[Predictor bound]
\label{lem:predictor_error}
    Let $z_t = \Pi_{\mathcal{W}}\!\left(\omega_t-\eta G(\omega_t)\right)$ denote the exact projected predictor generated by the true saddle operator, and define the stochastic gradient noise by $\xi_t := \widehat G_t(\omega_t)-G(\omega_t)$. Then the stochastic predictor $\omega_{t+\frac12}$ deviates from its exact counterpart by:
    \begin{align}
    \label{eq:predictor-diff}
        & \|\omega_{t+\frac12}-z_t\|_2 \le \eta \left\|\xi_t\right\|_2.
    \end{align}
    In particular, if Assumption~\ref{ass:oracle} holds, then
    \begin{align}
    \label{eq:predictor-diff-expect}
        & \mathbb{E}\!\left[\|\omega_{t+\frac12}-z_t\|^2_2\right] \le \eta^2 \sigma^2.
    \end{align}
\end{lemma}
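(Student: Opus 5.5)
The plan is to obtain both bounds from the nonexpansiveness of the Euclidean projection onto the closed convex set $\mathcal{W}$ (Assumption~\ref{ass:domain}). Recall that for any $u,v\in\mathbb{R}^d$ one has $\|\Pi_{\mathcal W}(u)-\Pi_{\mathcal W}(v)\|_2\le\|u-v\|_2$. This is standard: it follows from the variational characterization $\langle u-\Pi_{\mathcal W}(u),\, w-\Pi_{\mathcal W}(u)\rangle\le 0$ for all $w\in\mathcal W$. Applying this characterization at both points, adding the two inequalities, and using Cauchy--Schwarz gives the claim. I would state this property briefly and, if needed, include its one-line derivation.

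For the deterministic bound \eqref{eq:predictor-diff}, I would take the step size to be the same $\eta$ in both predictors, with $\eta_t=\eta$ at iteration $t$. Set $u=\omega_t-\eta\widehat G(\omega_t)$ and $v=\omega_t-\eta G(\omega_t)$, so that $\omega_{t+\frac12}=\Pi_{\mathcal W}(u)$ and $z_t=\Pi_{\mathcal W}(v)$. Nonexpansiveness then gives
\[
\|\omega_{t+\frac12}-z_t\|_2\le\|u-v\|_2=\eta\|\widehat G(\omega_t)-G(\omega_t)\|_2=\eta\|\xi_t\|_2 .
\]
The common term $\omega_t$ cancels, which is exactly why the bound does not depend on the geometry of $\mathcal W$.

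For the expectation bound \eqref{eq:predictor-diff-expect}, I would square the pathwise inequality to get $\|\omega_{t+\frac12}-z_t\|_2^2\le\eta^2\|\xi_t\|_2^2$. Next I would take the conditional expectation given $\omega_t$ (that is, given the history up to iteration $t$, on which $\omega_t\in\mathcal W$ is measurable). Assumption~\ref{ass:oracle} then gives $\mathbb{E}[\|\xi_t\|_2^2\mid\omega_t]\le\sigma^2$. The tower property yields the unconditional bound $\eta^2\sigma^2$.

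The only point needing care is measurability: the variance bound in Assumption~\ref{ass:oracle} holds conditionally on a random $\omega_t$. This is where I expect the slight subtlety to lie. I would handle it by conditioning on the filtration $\mathcal F_t$ generated by past shot outcomes, noting that the fresh shots at step $t$ satisfy the oracle assumption given $\mathcal F_t$. Unbiasedness is not needed for this lemma; only the second-moment bound is used.
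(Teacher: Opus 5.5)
Your proposal is correct and follows the standard argument, which is almost certainly what the paper's proof (deferred to the technical report) amounts to: nonexpansiveness of $\Pi_{\mathcal W}$ applied to $\omega_t-\eta\widehat G(\omega_t)$ and $\omega_t-\eta G(\omega_t)$, so that $\omega_t$ cancels, followed by squaring and using the tower property with the second-moment bound of Assumption~\ref{ass:oracle}. You are also right that the lemma requires $\eta_t=\eta$ and uses only the second-moment bound, not unbiasedness.
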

\begin{proof}
    Please refer to our technical report \cite{duong2026vqegtech}. 
\end{proof}

\begin{lemma}[Corrector bound]
\label{lem:corrector_error}
    Let the stochastic gradient error at the corrector step be $\xi_t' := \widehat G_t'(\omega_{t+\frac12}) - G(\omega_{t+\frac12})$.
    Then the corrector iterate satisfies:
    \begin{align}
    \label{eq:corrector-vs-z}
        & \left\|\omega_{t+1}-z_t\right\| \le
        \eta L \|\omega_{t+\frac12}-\omega_t\|_2 + \eta \left\|\xi_t'\right\|_2.
    \end{align}
    Consequently, under Assumption~\ref{ass:oracle}, it holds:
    \begin{align}
    \label{eq:corrector-vs-z-exp}
        \mathbb{E}\!\left[\left\|\omega_{t+1} \!-\! z_t\right\|^2\right]
        \!\le\!
        2\eta^2 L^2 \mathbb{E}\!\left[\|\omega_{t+\frac12}-\omega_t\|^2_2\right] + 2\eta^2 \sigma^2.
\end{align}
\end{lemma}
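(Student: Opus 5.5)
The plan is to compare the two projections $\omega_{t+1}=\Pi_{\mathcal W}(\omega_t-\eta\widehat G_t'(\omega_{t+\frac12}))$ and $z_t=\Pi_{\mathcal W}(\omega_t-\eta G(\omega_t))$. Both are projections of points with the same base $\omega_t$ onto the closed convex set $\mathcal W$ (Assumption~\ref{ass:domain}). The only tool needed for the deterministic part is nonexpansiveness of $\Pi_{\mathcal W}$. This gives
\begin{align*}
\|\omega_{t+1}-z_t\|_2 \le \eta\,\big\|\widehat G_t'(\omega_{t+\frac12}) - G(\omega_t)\big\|_2 .
\end{align*}

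Next I add and subtract $G(\omega_{t+\frac12})$ inside the norm, so that
\begin{align*}
\widehat G_t'(\omega_{t+\frac12}) - G(\omega_t) = \xi_t' + \big(G(\omega_{t+\frac12})-G(\omega_t)\big),
\end{align*}
and then apply the triangle inequality. For the second term I use Assumption~\ref{ass:smooth}. The operator $G$ is $\nabla\mathcal L$ with the $\phi$-block negated. Flipping the sign of one block does not change Euclidean norms of differences, so
\begin{align*}
\|G(\omega)-G(\omega')\|_2=\|\nabla\mathcal L(\omega)-\nabla\mathcal L(\omega')\|_2\le L\|\omega-\omega'\|_2 .
\end{align*}
This requires $\omega_{t+\frac12}$ and $\omega_t$ to lie in $\mathcal W$, which holds because $\omega_t$ and $\omega_{t+\frac12}$ are outputs of projections (and $\omega_0\in\mathcal W$). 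Combining these steps yields \eqref{eq:corrector-vs-z}.

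For the expectation bound \eqref{eq:corrector-vs-z-exp}, I square \eqref{eq:corrector-vs-z} and use $(a+b)^2\le 2a^2+2b^2$, then take total expectations. The Lipschitz term directly gives $2\eta^2L^2\mathbb E\|\omega_{t+\frac12}-\omega_t\|_2^2$. For the noise term I condition on $\omega_{t+\frac12}$ and invoke Assumption~\ref{ass:oracle}, which gives $\mathbb E[\|\xi_t'\|_2^2\mid\omega_{t+\frac12}]\le\sigma^2$. The tower property then gives $\mathbb E\|\xi_t'\|_2^2\le\sigma^2$.

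The only delicate point is this conditioning. The oracle bound is stated conditionally on the query point, and $\omega_{t+\frac12}$ is itself random through $\xi_t$. So I must assume that the corrector estimate uses fresh shots that are independent given $\omega_{t+\frac12}$, as in Algorithm~\ref{alg:qeg}. Once that is stated, the rest is routine. Note that this crude squaring does not need the cross term to vanish, so no unbiasedness argument is required here.
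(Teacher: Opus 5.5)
Your proof is correct and follows the standard argument that the lemma's form indicates; the paper defers its own proof to the technical report, so this comparison is against that inferred route. The steps are: nonexpansiveness of $\Pi_{\mathcal W}$ for the two projections sharing the base point $\omega_t$; adding and subtracting $G(\omega_{t+\frac12})$; the Lipschitz bound, which transfers from $\nabla\mathcal L$ to $G$ because the block sign flip is an isometry; then $(a+b)^2\le 2a^2+2b^2$, the conditional variance bound of Assumption~\ref{ass:oracle}, and the tower property. Your remark that the corrector estimate must use fresh shots, so that the oracle bound applies at the random point $\omega_{t+\frac12}$, is exactly the implicit requirement the expectation step needs.
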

\begin{proof}
    Please refer to our technical report \cite{duong2026vqegtech}. 
\end{proof}

\begin{lemma}[Residual-update bound]
\label{lem:r-vs-step}
    For every $t$, the projected residual satisfies the following bound:
    \begin{align}
    \label{eq:r-step-bound}
        & \eta^2 \left\|R_\eta(\omega_t)\right\|^2_2 \le 2\left\|\omega_{t+1}-\omega_t\right\|^2_2 + 2\left\|\omega_{t+1}-z_t\right\|^2_2.
    \end{align}
\end{lemma}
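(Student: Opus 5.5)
By the definition \eqref{eq:proj_residual} and of the exact predictor $z_t$ in Lemma~\ref{lem:predictor_error}, we have the identity
\begin{align*}
\eta\, \mathcal{R}_\eta(\omega_t) = \omega_t - \Pi_{\mathcal{W}}\big(\omega_t - \eta G(\omega_t)\big) = \omega_t - z_t .
\end{align*}
So the claim is only a statement about the distance between $\omega_t$ and $z_t$, and no properties of $G$, the noise, or the projection beyond this definition should be needed. Here I take the step size to be the constant $\eta$ used in the residual, i.e.\ $\eta_t=\eta$, and I read $R_\eta$ in \eqref{eq:r-step-bound} as $\mathcal{R}_\eta$.

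Next, insert the corrector iterate $\omega_{t+1}$ as an intermediate point:
\begin{align*}
\omega_t - z_t = (\omega_t - \omega_{t+1}) + (\omega_{t+1} - z_t).
\end{align*}
The triangle inequality gives $\|\omega_t - z_t\|_2 \le \|\omega_{t+1}-\omega_t\|_2 + \|\omega_{t+1}-z_t\|_2$. Squaring and using the elementary inequality $(a+b)^2 \le 2a^2 + 2b^2$, which is Young's inequality $2ab\le a^2+b^2$, yields
\begin{align*}
\eta^2 \|\mathcal{R}_\eta(\omega_t)\|_2^2 = \|\omega_t - z_t\|_2^2 \le 2\|\omega_{t+1}-\omega_t\|_2^2 + 2\|\omega_{t+1}-z_t\|_2^2 ,
\end{align*}
which is exactly \eqref{eq:r-step-bound}.

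There is no real obstacle. The only point needing care is that the residual and $z_t$ must use the same stepsize $\eta$ as the update; with time-varying $\eta_t$ one would state the bound with $\eta_t$ in place of $\eta$. The bound holds deterministically for every realization of the shot noise, so it can later be combined in expectation with Lemmas~\ref{lem:predictor_error} and \ref{lem:corrector_error}.
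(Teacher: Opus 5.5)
Your proof is correct and follows the natural (and almost certainly intended) argument: it uses the identity $\eta\,\mathcal{R}_\eta(\omega_t)=\omega_t-z_t$, inserts $\omega_{t+1}$ as an intermediate point, and applies $\|a+b\|_2^2\le 2\|a\|_2^2+2\|b\|_2^2$. The paper defers its own proof to a technical report, so I can't compare line by line, but your remarks that the bound holds pathwise for every shot realization and needs $\eta_t\equiv\eta$ are the right reading for combining it in expectation with Lemmas~\ref{lem:predictor_error} and~\ref{lem:corrector_error} in Theorem~\ref{thm:stationarity}.
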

\begin{proof}
    Please refer to our technical report \cite{duong2026vqegtech}. 
\end{proof}

\begin{theorem}[Approximate stationarity]
\label{thm:stationarity}
    Suppose Assumptions~\ref{ass:domain}--\ref{ass:oracle} hold. Let $\{\omega_t\}_{t=0}^{T-1}$ be generated by the projected stochastic extragradient method with constant stepsize $\eta \le c/L$, where $c>0$ is a sufficiently small absolute constant. Then there exist constants $C_1,C_2>0$, depending on smoothness constant $L$ and an upper bound on the variation of $\mathcal{L}$ over $\mathcal{W}$, such that for a uniformly random index $\tau$ from $\{0,\dots,T-1\}$, the expected squared projected residual satisfies the following bound:
    \begin{align}
    \label{eq:stationary}
        & \mathbb{E} \big[\|\mathcal{R}_\eta(\omega_\tau) \|_2^2 \big] \;\le\; \frac{C_1}{\eta^2 T} + C_2\,\sigma^2.
    \end{align}
    In particular, when the saddle-gradient estimator is obtained via parameter-shift with $S$ shots per shifted circuit evaluation, we have $\sigma^2 = O(1/S)$, and hence, it follows that:
    \begin{align}
    \label{eq:main-thm-shots}
        & \mathbb{E}\!\left[\left\|R_\eta(\omega_\tau)\right\|^2_2\right] \le \frac{C_1}{\eta^2 T} + O\!\left(\frac{1}{S}\right).
    \end{align}
\end{theorem}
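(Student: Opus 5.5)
The plan is to assemble the three lemmas into a per-iteration inequality, average over $t$ so the choice of $\tau$ becomes an average, and control the remaining step-length terms with a potential function. Fix $t$. Lemma~\ref{lem:r-vs-step} gives $\eta^2\|\mathcal{R}_\eta(\omega_t)\|_2^2 \le 2\|\omega_{t+1}-\omega_t\|_2^2 + 2\|\omega_{t+1}-z_t\|_2^2$. Taking expectations and inserting \eqref{eq:corrector-vs-z-exp} from Lemma~\ref{lem:corrector_error} yields
\begin{align*}
\eta^2\,\mathbb{E}\|\mathcal{R}_\eta(\omega_t)\|_2^2 \le 2\,\mathbb{E}\|\omega_{t+1}-\omega_t\|_2^2 + 4\eta^2L^2\,\mathbb{E}\|\omega_{t+\frac12}-\omega_t\|_2^2 + 4\eta^2\sigma^2 .
\end{align*}
Since $\eta L\le c$, the middle term is $O(c^2)\,\mathbb{E}\|\omega_{t+\frac12}-\omega_t\|_2^2$. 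Averaging over $t=0,\dots,T-1$ shows that the theorem follows once I establish
\begin{align*}
\sum_{t=0}^{T-1}\Big(\mathbb{E}\|\omega_{t+1}-\omega_t\|_2^2+\mathbb{E}\|\omega_{t+\frac12}-\omega_t\|_2^2\Big)\le C_1' + C_2'\,T\eta^2\sigma^2 .
\end{align*}
Dividing this by $\eta^2T$ produces the $C_1/(\eta^2T)+C_2\sigma^2$ shape.

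To prove the summed step-length bound I would use a potential $\Psi_t$ that is bounded on the compact set $\mathcal{W}$ (Assumption~\ref{ass:domain}). Its range should be controlled by $\Delta:=\sup_{\mathcal W}\mathcal{L}-\inf_{\mathcal W}\mathcal{L}$, or by $d_w^2$. The goal is a one-step inequality
\begin{align*}
\mathbb{E}[\Psi_{t+1}] \le \mathbb{E}[\Psi_t] - \kappa\,\mathbb{E}\big(\|\omega_{t+1}-\omega_t\|^2+\|\omega_{t+\frac12}-\omega_t\|^2\big) + O(\eta^2\sigma^2)
\end{align*}
for some $\kappa>0$ when $c$ is small. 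Telescoping then gives $C_1'\propto\Delta/\kappa$.

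The ingredients for the one-step inequality are as follows. First, the projection characterization $\langle \omega_t-\eta\widehat G(\omega_{t+\frac12})-\omega_{t+1},\,u-\omega_{t+1}\rangle\le 0$ for all $u\in\mathcal W$, applied with $u=\omega_t$ and $u=\omega_{t+\frac12}$, and similarly for the predictor step. Second, the $L$-Lipschitz bound $\|G(\omega_{t+\frac12})-G(\omega_t)\|\le L\|\omega_{t+\frac12}-\omega_t\|$ from Assumption~\ref{ass:smooth}. Third, the standard extragradient identity $\|\omega_{t+1}-\omega_t\|^2$ versus $\|\omega_{t+\frac12}-\omega_t\|^2$, which shows the two step lengths are comparable up to $(1+O(\eta L))$ factors plus noise. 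Fourth, unbiasedness (Assumption~\ref{ass:oracle}) kills the cross terms $\mathbb{E}\langle\xi_t,\cdot\rangle$ whenever the other argument is measurable before sampling. Where it is not, for example $\langle\xi_t',\omega_{t+1}-\omega_t\rangle$, I would use Lemma~\ref{lem:predictor_error} together with Young's inequality to convert the term into $\eta^2\sigma^2$ plus a small multiple of the step length, absorbed into $\kappa$.

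The main obstacle is choosing $\Psi_t$, because $G$ is a saddle field rather than a gradient. Taking $\Psi_t=\mathcal{L}(\omega_t)$ or $-\mathcal{L}(\omega_t)$ gives descent in only one block: the $\theta$-block increases $\mathcal L$ while the $\phi$-block decreases it. I would first try a block-wise approach. Apply the smoothness descent lemma separately to $\theta\mapsto-\mathcal{L}(\theta,\phi_t)$ and to $\phi\mapsto\mathcal{L}(\theta_{t+1},\phi)$. The off-diagonal mismatch is bounded by $L\|\theta_{t+1}-\theta_t\|\|\phi_{t+1}-\phi_t\|$ and absorbed using $\eta L\le c$. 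The total bounded drift of $\mathcal L$ across the two half-moves, at most $2\Delta$ per telescoped sum, then plays the role of $C_1'$.

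If this block argument fails to close, I would fall back on $\Psi_t=\|\omega_t-\omega^\star\|^2\le d_w^2$. That route needs an additional weak-Minty-type condition for the cross term $\langle G(\omega_{t+\frac12}),\omega_{t+\frac12}-\omega^\star\rangle$, which I would state explicitly as the extra hypothesis being used.

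Finally, the shot statement \eqref{eq:main-thm-shots} follows by substituting the bound $\sigma^2\le d\|\widetilde A\|_\infty^2/(2S)$ derived above into \eqref{eq:stationary}.
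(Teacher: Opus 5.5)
Your reduction through Lemmas~\ref{lem:r-vs-step} and~\ref{lem:corrector_error} is correct. However, it moves the whole content of the theorem into the summed step-length bound, and neither of your routes proves that bound under Assumptions~\ref{ass:domain}--\ref{ass:oracle}.

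The block-wise argument breaks at the telescoping step. The descent lemma says one half-move changes $\mathcal{L}$ by roughly $\pm\eta\|\nabla_\theta\mathcal{L}\|_2^2$ and the other by roughly $\mp\eta\|\nabla_\phi\mathcal{L}\|_2^2$. But the partial sums $\sum_t[\mathcal{L}(\theta_{t+1},\phi_t)-\mathcal{L}(\theta_t,\phi_t)]$ and $\sum_t[\mathcal{L}(\theta_{t+1},\phi_{t+1})-\mathcal{L}(\theta_{t+1},\phi_t)]$ do not telescope individually. Only their total does, to $\mathcal{L}(\omega_T)-\mathcal{L}(\omega_0)\in[-\Delta,\Delta]$. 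Bounded variation of $\mathcal{L}$ therefore controls only the difference $\eta\sum_t(\|\nabla_\theta\mathcal{L}\|_2^2-\|\nabla_\phi\mathcal{L}\|_2^2)$, not the sum you need. Your claim that the drift contributes at most $2\Delta$ is unjustified. For $\mathcal{L}=\theta\phi$, the two half-moves change $\mathcal{L}$ by about $\pm\eta\phi_t^2$ and $\mp\eta\theta_t^2$. These nearly cancel, while each partial sum is of order $\eta\sum_t\|\omega_t\|_2^2\approx\|\omega_0\|_2^2/\eta$ for extragradient, far above $2\Delta$ when $\eta$ is small. The off-diagonal mismatch you plan to absorb is not the obstruction; the cancellation between the two players is.

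No cleverer potential can fix this, because under Assumptions~\ref{ass:domain}--\ref{ass:oracle} alone the bound fails even with $\sigma=0$. Deterministic extragradient with a small constant step on a smooth nonconvex--nonconcave problem can be attracted to a limit cycle that contains no stationary point, e.g.\ the ``Forsaken'' example of Hsieh, Mertikopoulos and Cevher (2021). Place that cycle in the interior of a box $\mathcal{W}$. The projections are then eventually inactive, so $\mathcal{R}_\eta(\omega_t)=G(\omega_t)$ stays bounded away from zero, and the average of $\|\mathcal{R}_\eta(\omega_t)\|_2^2$ cannot decay like $1/T$. Your argument never uses the PQC form of $\mathcal{L}$, so it would also have to cover this example.

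Your fallback correctly concludes that extra structure is required, but it proves a different theorem. Under the Minty condition $\langle G(\omega),\omega-\omega^\star\rangle\ge 0$ on $\mathcal{W}$, the two projection inequalities give
\begin{align*}
\mathbb{E}\|\omega_{t+1}-\omega^\star\|_2^2 &\le \mathbb{E}\|\omega_t-\omega^\star\|_2^2 \\
&\quad -(1-3\eta^2L^2)\,\mathbb{E}\|\omega_{t+\frac12}-\omega_t\|_2^2+6\eta^2\sigma^2 ,
\end{align*}
using fresh shots at the corrector so that $\mathbb{E}\langle\xi_t',\omega_{t+\frac12}-\omega^\star\rangle=0$. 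This closes your reduction, but with $C_1\propto d_w^2$ rather than a constant depending on the variation of $\mathcal{L}$. A weak Minty condition with $\rho>0$ additionally needs $\rho\lesssim\eta$, and more care under constraints. Neither hypothesis is among the stated assumptions, and neither is known to hold for the PQC payoff. As written, the proposal does not establish the stated bound, and it cannot be established without such an added assumption.
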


\textit{Proof:} The full proof is provided in~\cite{duong2026vqegtech}.

\begin{remark}
\label{rmk:sta}
    Theorem~\ref{thm:stationarity} is a \emph{nonconvex--nonconcave} guarantee. It proves convergence to \emph{approximate projected stationarity} of the \emph{parametric} game. Stationarity in parameter space, however, does not generally imply proximity to a global saddle point. Thus, the induced strategies $(x_{\theta_t},y_{\phi_t})$ are not guaranteed to achieve a small Nash gap in the original finite game.   
\end{remark}

\section{NUMERICAL RESULTS}
\label{sec:Results}
We evaluate the proposed projected VQEG algorithm on finite two-player zero sum matrix games with sizes $m=n \in \{2, 4, 5, 6, 8, 16, 32\}$ with different classes of games as \emph{dominant row instances, matching pennies instances, random instances}. For example, Fig.~\ref{fig:game_instance} shows two representative $8\times 8$ instances used in our experiments. The dominant-row game shown in Fig.~\ref{fig:Dominant}  exhibits a clear equilibrium structure, where the last row yields consistently high payoffs across all column actions. Fig. \ref{fig:Random}  shows a random instance with no obvious structure, representing a more challenging setting.
\begin{figure}[h!]
\vspace{-0.3cm}
     \subfigure[Dominant-row instance]{
        \hspace*{-0.1in} 
        \includegraphics[width=0.235\textwidth]{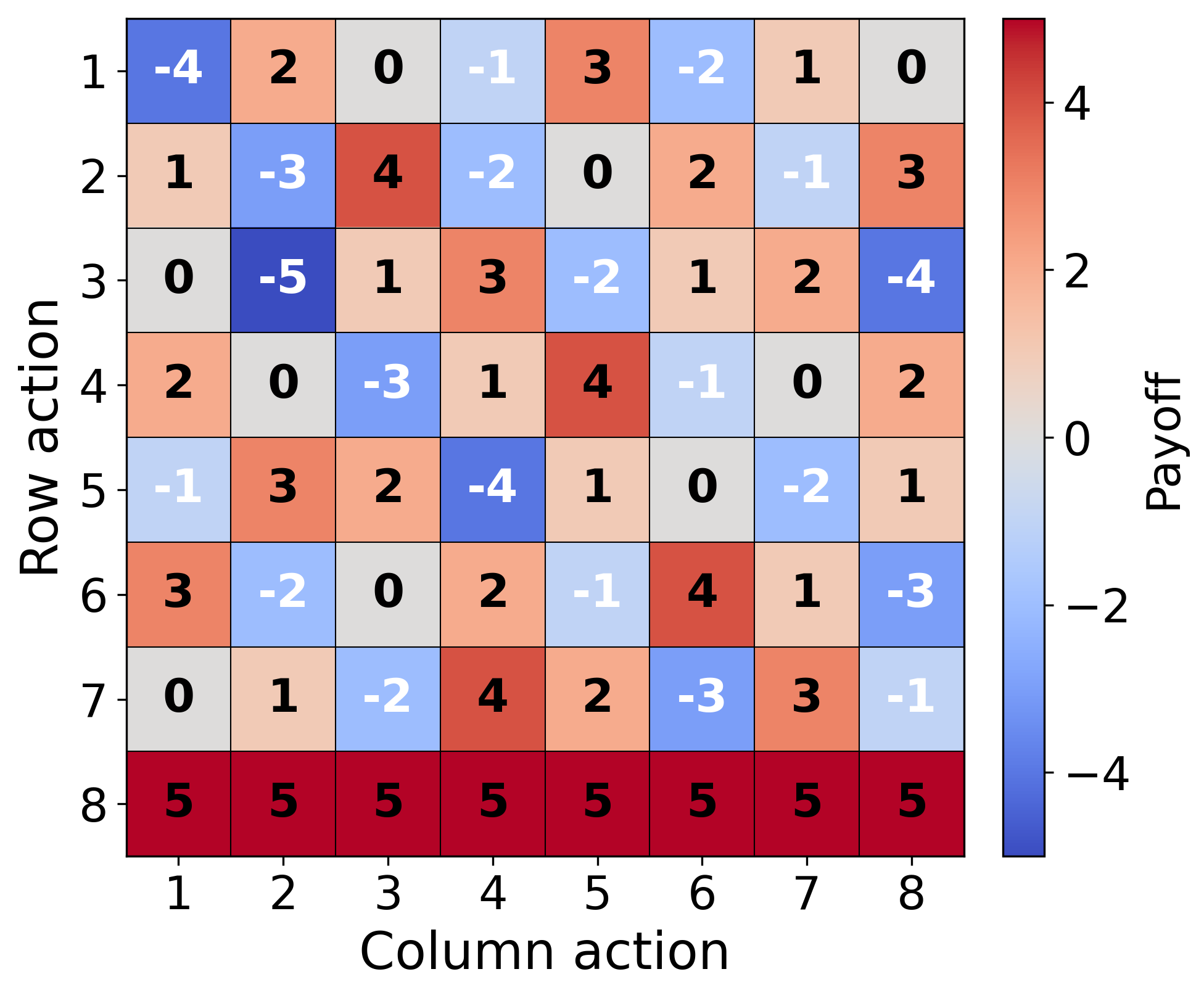}
	    \label{fig:Dominant} }
	\hspace*{-.1in}
    \subfigure[Random instance]{
	\includegraphics[width=0.235\textwidth]{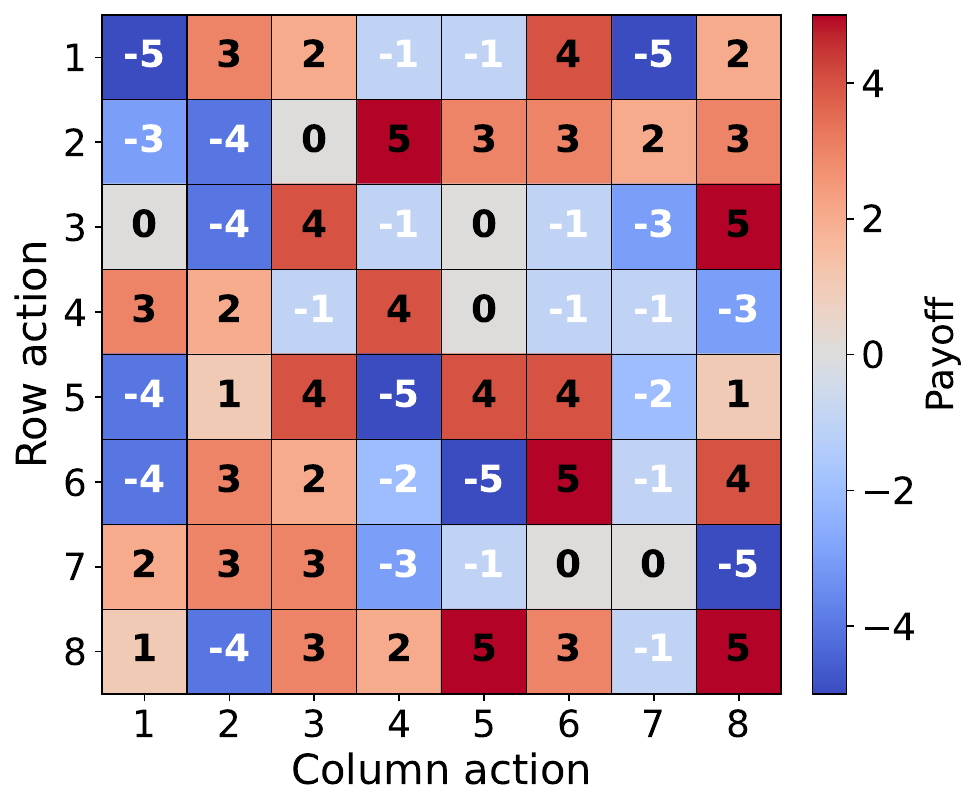}
	    \label{fig:Random} } 
        \vspace{-0.3cm}
	\caption{Illustration of payoff matrix for $8\times 8$ zero-sum game.}
	\label{fig:game_instance}
    \vspace{-0.3cm}
\end{figure}

For each instance, the exact NE and the corresponding game value $v^\star$ are computed using an LP solver and used as ground truth 
for evaluation. Performance is assessed using the Nash gap defined in~\eqref{eq:NE_gap_eq}, evaluated on the induced strategies $(x,y)$ in the original game. 
A solution is considered successful if $ \mathcal{G}(x, y) \le \varepsilon $, with  tolerance $\varepsilon = 5 \times 10^{-3}$.
Our circuit ansatz is constructed by which, within each layer, we apply single-qubit rotations $\mathrm{RY}$ and $\mathrm{RZ}$ on every qubit, followed by a nearest-neighbor $\mathrm{CZ}$ entangling ring. With $L$ layers, each player has $d_{\text{player}}=2qL$ trainable parameters. For example, for $m=n=4$ we have $q_r=q_c=2$, so each layer contributes $2q=4$ parameters per player.
\vspace{-0.3cm}
\begin{figure}[h!]
     \subfigure[Nash-gap convergence]{
        \hspace*{-0.1in} 
        \includegraphics[width=0.225\textwidth]{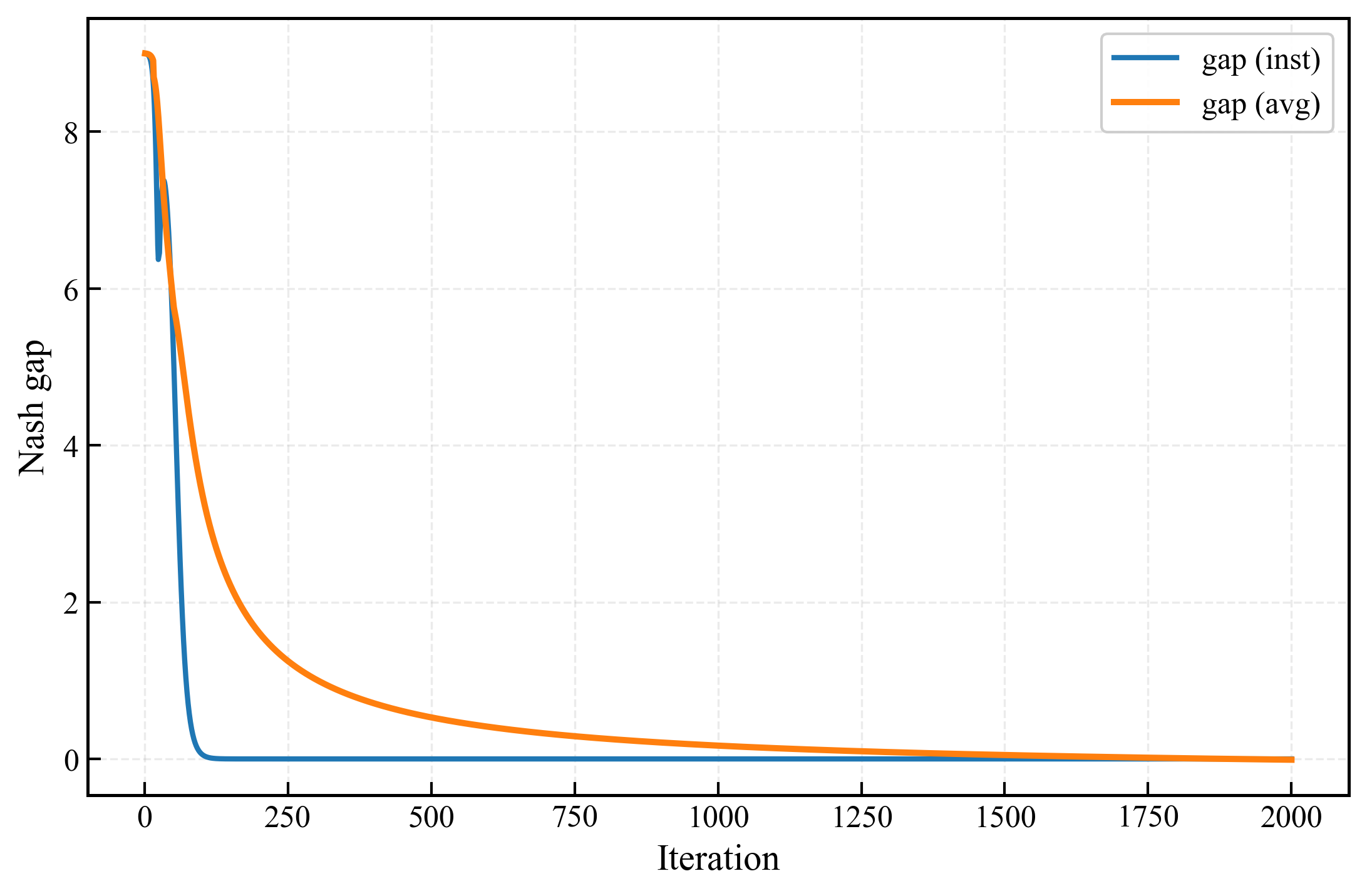}
	    \label{fig:NG} }
	\hspace*{-.1in}
    \subfigure[Leakage to dummy action]{
	\includegraphics[width=0.235\textwidth]{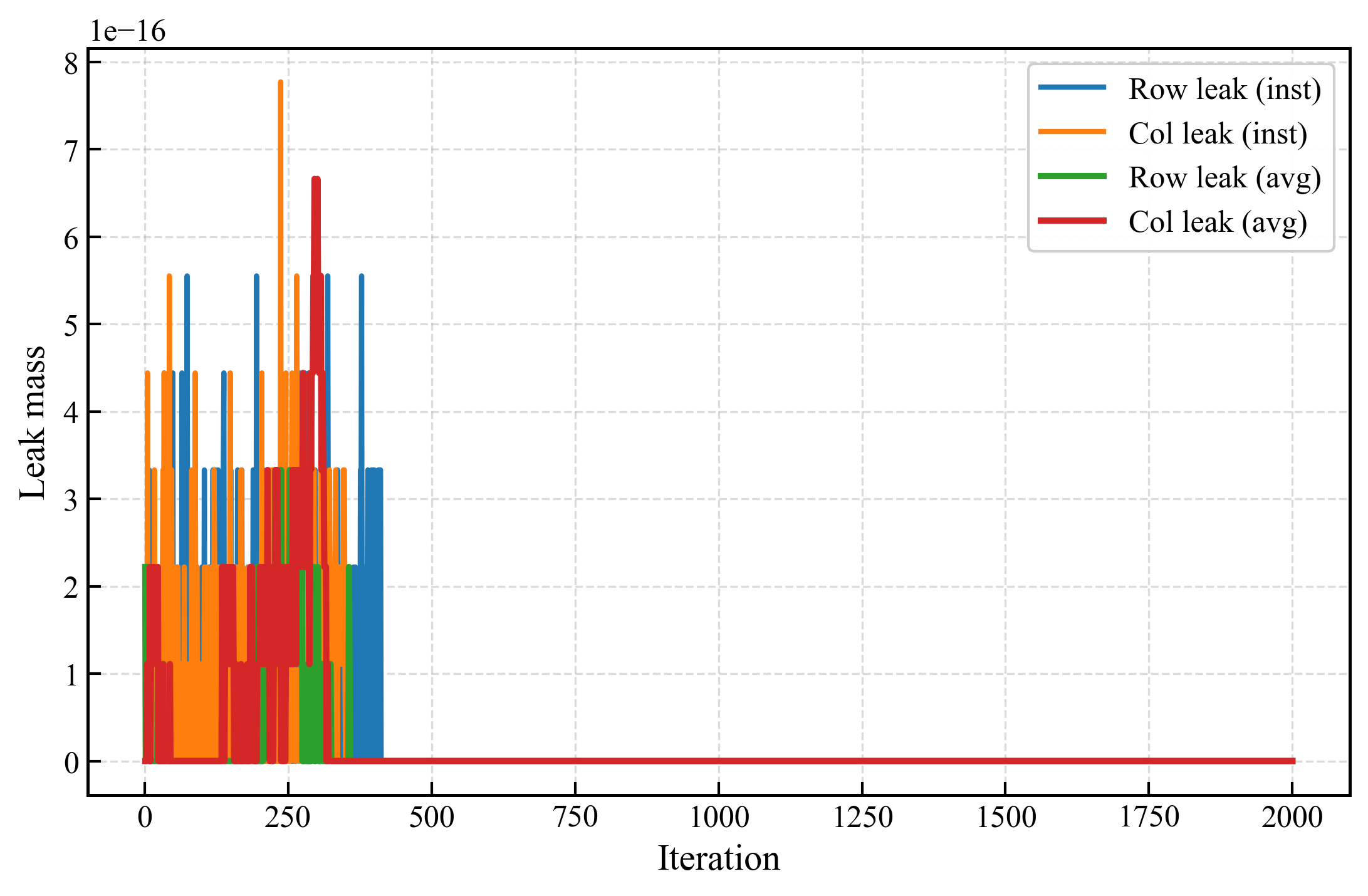}
	    \label{fig:Leak} } 
	\caption{VQEG convergence and leakage for $8 \times 8$ matrix game with dominant row startegy.}
	\label{fig:Dominant_4x4_Nash_gap}
\end{figure}

\vspace{-0.3cm}
\begin{table*}[t]
\centering
\caption{The VQEG Nash gap across different instances.}
\label{tab:avg_best}
\setlength{\tabcolsep}{3pt}
\renewcommand{\arraystretch}{1.15}
\begin{tabular}{llccccc}
\hline
Game & Size &
gap (avg) & PASS &
gap (best) & PASS &
VQEG-Entropy avg / best (PASS) \\
\hline
Dominant-row & \(4\times4\)   &
\(9.581\times10^{-6}\) & $\checkmark$ &
\(9.168\times10^{-6}\) & $\checkmark$ &
\(\approx 0\) / \(\approx 0\) (\checkmark/\checkmark) \\
Dominant-row & \(8\times8\)   &
\(1.615\times10^{-5}\) & \checkmark &
\(5.459\times10^{-6}\) & \checkmark &
\(7.616\times10^{-5}\) / \(3.470\times10^{-5}\) (\checkmark/\checkmark) \\
Dominant-row & \(16\times16\) &
\(6.340\times10^{-4}\) & $\checkmark$ &
\(\approx 0\) & \checkmark &
\(1.786\times10^{-11}\) / \(\approx 0\) (\checkmark/\checkmark) \\
Dominant-row & \(32\times32\) &
\(3.410\times10^{-4}\) & \checkmark &
\(2.000\times10^{-5}\) & \checkmark &
\(7.956\times10^{-6}\) / \(2.986\times10^{-9}\) (\checkmark/\checkmark) \\
\hline
Matching-pennies & \(4\times4\)   &
\(1.868\times10^{-10}\) & \checkmark &
\(6.661\times10^{-16}\) & \checkmark &
\(4.210\times10^{-7}\) / \(5.406\times10^{-7}\) (\checkmark/\checkmark) \\
Matching-pennies & \(8\times8\)   &
\(5.378\times10^{-7}\) & \checkmark &
\(1.133\times10^{-9}\) & \checkmark &
\(2.062\times10^{-3}\) / \(1.514\times10^{-4}\) (\checkmark/\checkmark) \\
Matching-pennies & \(16\times16\) &
\(1.953\times10^{-8}\) & \checkmark &
\(2.927\times10^{-12}\) & \checkmark &
\(7.639\times10^{-3}\) / \(2.188\times10^{-3}\) ($\times$/\checkmark) \\
Matching-pennies & \(32\times32\) &
\(3.125\times10^{-9}\) & \checkmark &
\(1.403\times10^{-13}\) & \checkmark &
\(1.184\times10^{-2}\) / \(7.749\times10^{-3}\) ($\times$/$\times$) \\
\hline
Randomize & \(4\times4\)   &
\(2.444\times10^{-5}\) & $\checkmark$ &
\(1.312\times10^{-5}\) & $\checkmark$ &
\(2.456\times10^{-1}\) / \(1.167\times10^{-1}\) ($\times$/$\times$) \\
Randomize & \(8\times8\)   &
\(3.515\times10^{-5}\) & $\checkmark$ &
\(1.910\times10^{-5}\) & $\checkmark$ &
\(1.989\times10^{-2}\) / \(1.096\times10^{-2}\) ($\times$/$\times$) \\
Randomize & \(16\times16\) &
\(2.875\times10^{-3}\) & $\checkmark$ &
\(3.174\times10^{-4}\) & $\checkmark$ &
\(2.770\times10^{-1}\) / \(2.286\times10^{-1}\) ($\times$/$\times$) \\
Randomize & \(32\times32\) &
\(2.154\times10^{-3}\) & $\checkmark$ &
\(2.027\times10^{-3}\) & $\checkmark$ &
\(3.912\times10^{-1}\) / \(3.891\times10^{-1}\) ($\times$/$\times$) \\
\hline
\end{tabular}
\end{table*}

Fig.~\ref{fig:Dominant_4x4_Nash_gap} illustrates the convergence behavior of VQEG on an $8 \times 8$ dominant-row instance. It is observed that the instantaneous Nash gap $\mathcal{G}_t$ decreases rapidly in the early iterations and then remains close to zero for thereafter, as displayed in Fig.~\ref{fig:NG}, indicating stable last-iterate convergence. 
The running-average gap decreases more gradually because it incorporates the suboptimal early iterates, but the final tail-averaged gap is effectively reaching equilibrium accuracy. 
Fig.~\ref{fig:Leak} further shows the leakage to dummy actions introduced by the padding construction. Both the row and column leakage remain about $10^{-16}$ throughout the run, with only minor fluctuations during the initial iterations. These results confirm that the proposed VQEG attains an equilibrium solution.

Table \ref{tab:avg_best} summarizes the performance of VQEG across different game classes and sizes. For dominant-row games, VQEG consistently achieves high-precision equilibria across all sizes up to $32 \times 32$, with Nash gaps often approaching numerical precision. Both average and best-case performance satisfy the $\epsilon$-equilibrium criterion. For matching-pennies games, VQEG remains stable and accurate, achieving very small Nash gaps across all sizes. 
While entropy-regularized variants degrade at larger scales, VQEG maintains strong performance, particularly in best-case results. Finally, for random games, although VQEG still satisfies the $\epsilon$-equilibrium criterion in many cases, the achieved Nash gaps are larger and performance degrades with increasing problem size. This highlights the impact of representation limitations and stochastic estimation noise in unstructured settings.

\section{CONCLUSIONS}
\label{sec:Conclusion}
We proposed a VQEG framework for computing equilibria in two-player zero-sum matrix games. The approach parameterizes mixed strategies via PQC-induced Born distributions and solves the resulting parametric saddle-point problem using projected extragradient updates with parameter-shift gradient estimation and explicit shot complexity considerations. 
To enable qubit-compatible representations without altering equilibrium structure, we introduced a dominated embedding that maps arbitrary $(m,n)$ games to power-of-two dimensions.
We established convergence to approximate projected stationarity under standard smoothness and bounded-variance assumptions, and evaluated performance using the game-space Nash gap as an equilibrium certificate.
Numerical results demonstrate that VQEG achieves high-precision equilibria on structured instances and scales to moderately large problems, while highlighting challenges in unstructured settings.
Future work will focus on improving circuit expressivity, variance reduction techniques, and algorithmic regularization to enhance scalability and robustness in more challenging game settings.

\bibliographystyle{IEEEtran}
\bibliography{Refs.bib}

\end{document}